\DeclareMathAlphabet{\mathcal}{OMS}{cmsy}{m}{n} %
\newtheorem{assumption}{Assumption}
\newtheorem{proposition}{Proposition}
\newtheorem{theorem}{Theorem}
\newtheorem{definition}{Definition}
\newtheorem{corollary}{Corollary}
\newtheorem{remark}{Remark}
\def\vzero{{0}}
\def\vc{{c}}
\def\vd{{d}}
\def\vf{{f}}
\def\vg{{g}}
\def\vh{{h}}
\def\vp{{p}}
\def\vu{{u}}
\def\vv{{v}}
\def\vx{{x}}
\def\evc{{c}}
\def\evd{{d}}
\def\evf{{f}}
\def\mA{{A}}
\def\mB{{B}}
\def\mE{{E}}
\def\mQ{{Q}}
\def\mR{{R}}
\DeclareMathAlphabet{\mathsfit}{\encodingdefault}{\sfdefault}{m}{sl}
\SetMathAlphabet{\mathsfit}{bold}{\encodingdefault}{\sfdefault}{bx}{n}
\def\sP{{\mathbb{P}}}
\def\sR{{\mathbb{R}}}
\def\sS{{\mathbb{S}}}
\def\sU{{\mathbb{U}}}
\def\sX{{\mathbb{X}}}
\newcommand\copyrighttext{%
	\footnotesize This work has been received on 24 January 2025 and has been accepted by the IEEE for publication with the Digital Object Identifier:  10.1109/TAC.2025.3558137 on 29 March 2025.
\textcopyright 2025 The Authors. This work is licensed under a Creative Commons Attribution 4.0 License. For more information, see http://creativecommons.org/licenses/by/4.0/
}
\newcommand\copyrightnotice{%
	\begin{tikzpicture}[remember picture,overlay]
		\node[anchor=south,yshift=10pt] at (current page.south) {\fbox{\parbox{\dimexpr\textwidth-\fboxsep-\fboxrule\relax}{\copyrighttext}}};
	\end{tikzpicture}%
}
\begin{document}
\title{Robust Model Predictive Control Exploiting Monotonicity Properties}
\author{Moritz Heinlein$^{1}$, Sankaranarayanan Subramanian$^{2}$
        and Sergio Lucia$^{1}$%
\thanks{*The research leading to these results has received funding from the Deutsche Forschungsgemeinschaft (DFG, German Research Foundation) under grant agreement number 423857295. The authors thank M. Diehl, F. Messerer, M. Molnar and J. Köhler for fruitful discussions.}%
\thanks{$^{1}$ M. Heinlein and S. Lucia are with the Chair of Process Automation Systems at Technische Universität Dortmund, Emil-Figge-Str. 70, 44227 Dortmund, Germany (e-mail: {moritz.heinlein}, {sergio.lucia} @tu-dortmund.de).}%
\thanks{$^{2}$ S. Subramanian is with the Chair of Process Dynamics and Operations, Technische Universität Dortmund, Emil-Figge-Str. 70, 44227 Dortmund, Germany (e-mail: sankaranarayanan.subramanian@tu-dortmund.de).}%
}
\markboth{Accepted for Publication in IEEE Transactions on Automatic Control}%
{Robust model predictive control exploiting monotonicity properties}
\maketitle %
\copyrightnotice %
\begin{abstract}
Robust model predictive control algorithms are essential for addressing unavoidable errors due to the uncertainty in predicting real-world systems. However, the formulation of such algorithms typically results in a trade-off between conservatism and computational complexity.
Monotone systems facilitate the efficient computation of reachable sets and thus the straightforward formulation of a robust model predictive control approach optimizing over open-loop predictions.

We present an approach based on the division of reachable sets to incorporate feedback in the predictions, resulting in less conservative strategies. The concept of mixed-monotonicity enables an extension of our methodology to non-monotone systems. 
Lastly, we discuss the relation between widely used tube-based MPC approaches and our proposed methods.
The potential of the proposed approaches is demonstrated through a nonlinear high-dimensional chemical tank reactor cascade case study.
\end{abstract}

\begin{IEEEkeywords}
Nonlinear predictive control, Robust control, Optimal control, Process control
\end{IEEEkeywords}
\vspace{-2ex}

\IEEEpeerreviewmaketitle

\section{Introduction}

\IEEEPARstart{M}{odel} predictive control (MPC) has become a widely used control scheme, as it can be applied to multi-input multi-output nonlinear systems with constraints and can control the system optimally according to a specifiable performance metric over a prediction horizon~\cite{rawlingsModelPredictiveControl2017}.
However, the model of the system as well as the future disturbances need to be known accurately in order to guarantee constraint satisfaction, recursive feasibility and stability. As this is a very important requirement, multiple approaches have been proposed to ensure feasibility and stability in the presence of uncertainties.

In contrast to nominal MPC, where the optimal policy is one trajectory of inputs that ignores the presence of uncertainty, the optimal policy for robust MPC is a function of the not-yet-known realizations of the uncertainty, because future actions can be chosen to counteract the realized uncertainty. This dependency is also called recourse.
Open-loop robust approaches do not take recourse into account, as they optimize, similar to nominal MPC, over a single input trajectory which needs to satisfy the constraints for all uncertainties and optimizes typically the worst case of the performance metric over the open-loop predictions~\cite{campoRobustModelPredictive1987}. 
Closed-loop robust approaches optimize over general policies, which is not tractable in general and therefore most closed-loop robust approaches restrict themselves to optimizing over linear policies~\cite{mayneRobustModelPredictive2005}, as also done in tube-based methods~\cite{rawlingsModelPredictiveControl2017}.%

In addition to the consideration of recourse, another challenge for robust approaches is the need to ensure that constraints are satisfied for every possible realization of the uncertainty. This is typically done by computing or approximating the reachable set of the uncertain dynamic system. For example, in~\cite{alamoRobustMPCConstrained2005}, an approach was introduced in which the reachable set for additive uncertainties was over-approximated via interval arithmetics. In tube-based linear MPC, the reachable sets (or tubes) can be computed offline or online as polytopes~\cite{leeRobustRecedingHorizon2000, flemingRobustTubeMPC2015a}. For nonlinear systems, the computations become increasingly complex, but there exist some approaches based, for example, on ellipsoids~\cite{houskaRobustOptimizationNonlinear2012,villanuevaComputingEllipsoidalRobust2017,fengMinmaxDifferentialInequalities2019, messererEfficientAlgorithmTubebased2021}, see also~\cite{kouvaritakisModelPredictiveControl2016} or~\cite{mayneTubebasedRobustNonlinear2011} for an overview.

This paper is motivated by scenario-tree approaches for handling parametric or additive uncertainties over the prediction horizon~\cite{scokaertMinmaxFeedbackModel1998,delapenaStochasticProgrammingApplied2005,luciaMultistageNonlinearModel2013c}. While these methods incorporate recourse by considering different input vectors for each branch, their computational complexity grows exponentially with the prediction horizon and uncertainty set dimension. Simplified approaches, such as considering a few key scenarios or combining with tube-based methods~\cite{subramanianTubeenhancedMultistageMPC2021}, have shown promise in practice, even for complex nonlinear systems~\cite{luciaMultistageNonlinearModel2013c,holtorfMultistageNMPCOnline2019,bonzaniniFastApproximateLearningbased2021a}.

Motivated by this, we focus on simple scenario-trees for robust control of monotone systems. Monotone systems~\cite{angeliMonotoneControlSystems2003} allow efficient computation of reachable sets, independent of the number of uncertainties, due to their dynamics being influenced monotonically by states and uncertainties~\cite{meyerRobustControlledInvariance2016,luciaSetBasedOptimalControl2016}. This enables robust control by enveloping the system's dynamics with the maximum and minimum realizations of the uncertainty.

 Monotone systems are present in many engineering fields, such as biochemistry\cite{luciaSetBasedOptimalControl2016} or temperature control in buildings~\cite{meyerRobustControlledInvariance2016}.
The concept of monotonicity can be extended by decomposing the dynamics into monotonically increasing and decreasing components~\cite{cooganMixedMonotonicityReachability2020, smithDiscreteDynamicsMonotonically2006} for reachability analysis and robust control of general nonlinear dynamic systems~\cite{abateRobustlyForwardInvariant2022,harapanahalliContractionGuidedAdaptivePartitioning2023}.

This paper is an extension of our previous work~\cite{heinleinRobustMPCApproaches2022}. 
The primary contribution of this work is to extend the presented methods for discretizing the optimal feedback strategy via partitioning of reachable sets, previously limited to monotone systems, to a general framework for guaranteed robust nonlinear MPC using the concept of mixed-monotonicity. By considering systems under arbitrary feedback policies, this framework also encompasses existing tube-based approaches. Additionally, the proposed methodology is demonstrated on a nonlinear system with up to 25 states and 20 uncertainties.

This paper is structured as follows. Section~\ref{sec:Monotone_Systems} introduces the concept of monotone systems and the calculation of reachable sets. The open-loop and closed-loop robust approaches for monotone systems are shown in Section~\ref{sec:Mon_MPC}. %
Section~\ref{sec:Mixed_Mon} extends the applicability of the %
presented approaches beyond monotone systems via the concept of mixed-monotonicity.
In Section~\ref{sec:Case}, the applicability of the proposed approach to large-scale nonlinear systems is demonstrated.
\vspace{-1ex}
\section{Reachable Sets of Monotone Systems} \label{sec:Monotone_Systems}

We consider nonlinear discrete-time systems of the form
\begin{equation} \label{eq:system}
\vx_{k+1}=\vf(\vx_k,\vu_k,\vp_k),
\end{equation}
where $\vx_k\in\sR^{n_x}$ denotes the states, $\vu_k \in \sR^{n_u}$ denotes the inputs and $\vp_k\in\sR^{n_p}$ represents the parameters assumed to be in a compact set $\sP$. The system dynamics $\vf : \mathbb{R}^{n_x}\times\mathbb{R}^{n_u}\times\mathbb{R}^{n_p}\mapsto \mathbb{R}^{n_x}$ is assumed to be continuous and differentiable $\forall \vx_k\in\sR^{n_x},\forall \vu_k\in\sR^{n_u},\forall \vp_k\in\sR^{n_p}$. 
\begin{definition}[Monotonicity of dynamic systems] \label{def:monotone}
A system is called monotone on the sets $\sX \subseteq\sR^{n_x},\ \sU\subseteq\sR^{n_u},\ \sP\subset\sR^{n_p}$ with respect to the states, if for every pair $ \hat{\vx}$ and $\tilde{\vx}$ in $\sX$ that satisfies the condition $\hat{\vx} \geq \tilde{\vx}$, the following inequality holds:
\begin{equation} \label{eq:monotone_states}
\vf(\hat{\vx},\vu,\vp)\geq \vf(\tilde{\vx},\vu,\vp),\quad \forall \vu\in\sU,\ \forall \vp\in\sP,
\vspace{-0.5ex}
\end{equation}
where the inequalities are understood elementwise.

Analogously, a system is called monotone on $\sX\subseteq\sR^{n_x},\ \sU\subseteq\sR^{n_u},\ \sP\subset\sR^{n_p}$ with respect to the uncertainty, if for every pair $\hat{\vp}$ and $\tilde{\vp}$ in $\sP$  that satisfies the condition $\hat{\vp} \geq \tilde{\vp}$, the following inequality holds:
\begin{equation} \label{eq:monotone_uncert}
\vf(\vx,\vu,\hat{\vp})\geq \vf(\vx,\vu,\tilde{\vp}),\quad \forall \vu\in\sU,\forall \vx\in\sX.
\end{equation}
\end{definition}
\vspace{-1ex}
\begin{remark}
In this paper inequalities over multiple dimensions are understood to hold independently in each dimension.
In the following we assume monotonicity with respect to the states and the uncertainties when mentioning monotone systems.
\end{remark}
\vspace{-0.5ex}
The monotonicity conditions in \eqref{eq:monotone_states} and \eqref{eq:monotone_uncert} can be also checked if all the elements of the Jacobian of the dynamics with respect to the states and the uncertainties are non-negative.
How to extend the idea of monotonicity to general systems is discussed in Section~\ref{sec:Mixed_Mon}.
For further details on monotone systems, see~\cite{angeliMonotoneControlSystems2003}.

An important advantage of monotone systems is that computing tight outer approximations of reachable sets is straightforward, as stated in the following proposition.
\begin{proposition} \label{prop:Reach_Sets}
The 1-step reachable set for any fixed input $\vu\in\sU$ of system~\eqref{eq:system} satisfying~\eqref{eq:monotone_states} and~\eqref{eq:monotone_uncert}
with $\vx\in \left[\vx^-,\vx^+\right]$ and $\vp\in \left[\vp^-,\vp^+\right]$ is bounded by the interval 
\begin{equation} \label{eq:Reach_sets}
\vf(\vx,\vu,\vp) \in \left[\vf(\vx^-,\vu,\vp^-),\vf(\vx^+,\vu,\vp^+)\right].
\end{equation}
\end{proposition}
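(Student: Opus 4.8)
The plan is to prove the containment directly from the two monotonicity properties collected in Definition~\ref{def:monotone}, chaining them through the transitivity of the elementwise partial order (as fixed by the Remark following the definition). I note first that the claim is one of outer approximation: every image point $\vf(\vx,\vu,\vp)$ generated by some $\vx\in\left[\vx^-,\vx^+\right]$ and some $\vp\in\left[\vp^-,\vp^+\right]$ must be shown to lie in the interval $\left[\vf(\vx^-,\vu,\vp^-),\vf(\vx^+,\vu,\vp^+)\right]$. I do \emph{not} need the reverse inclusion, since the image of a box under the nonlinear map~\eqref{eq:system} is generally not itself a box.

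First I would fix the input $\vu\in\sU$ and take an arbitrary $\vx\in\left[\vx^-,\vx^+\right]$ and $\vp\in\left[\vp^-,\vp^+\right]$, which by definition of the multidimensional interval means $\vx^-\leq\vx\leq\vx^+$ and $\vp^-\leq\vp\leq\vp^+$ elementwise. Since the box $\left[\vx^-,\vx^+\right]$ is assumed to lie in $\sX$ and the box $\left[\vp^-,\vp^+\right]$ in $\sP$, the corner points $\vx^-,\vx^+$ and $\vp^-,\vp^+$ together with $\vx$ and $\vp$ are all admissible arguments for both monotonicity conditions.

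Second, I would apply monotonicity with respect to the states, inequality~\eqref{eq:monotone_states}, at the held parameter $\vp$: from $\vx^-\leq\vx$ and $\vx\leq\vx^+$ this gives
\begin{equation*}
\vf(\vx^-,\vu,\vp)\leq\vf(\vx,\vu,\vp)\leq\vf(\vx^+,\vu,\vp).
\end{equation*}
Then I would apply monotonicity with respect to the uncertainty, inequality~\eqref{eq:monotone_uncert}, at the two corner states: using $\vp^-\leq\vp$ at $\vx^-$ and $\vp\leq\vp^+$ at $\vx^+$ yields
\begin{equation*}
\vf(\vx^-,\vu,\vp^-)\leq\vf(\vx^-,\vu,\vp), \qquad \vf(\vx^+,\vu,\vp)\leq\vf(\vx^+,\vu,\vp^+).
\end{equation*}
Chaining these four inequalities through transitivity gives $\vf(\vx^-,\vu,\vp^-)\leq\vf(\vx,\vu,\vp)\leq\vf(\vx^+,\vu,\vp^+)$, which is exactly the bound~\eqref{eq:Reach_sets}. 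Since $\vx$ and $\vp$ were arbitrary in their respective boxes, the whole reachable set is contained in the stated interval.

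I expect no genuine obstacle beyond careful bookkeeping. The one point needing attention is that the two monotonicity conditions are invoked \emph{separately} rather than jointly: I vary the state at a held parameter, and then the parameter at a held corner state, inserting the intermediate evaluation point $\vf(\vx^-,\vu,\vp)$ only to bridge the chain. I must therefore confirm that each comparison in the chain is between quantities that are genuinely ordered elementwise and that all intermediate evaluation points are well defined on $\sX\times\sU\times\sP$. Because the Remark fixes the componentwise interpretation of the inequalities, the $n_x$ scalar chains can be assembled independently, so no joint monotonicity in the pair $(\vx,\vp)$ is required and the extremal corners $(\vx^-,\vp^-)$ and $(\vx^+,\vp^+)$ supply the bounds.
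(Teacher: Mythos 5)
Your proof is correct and is precisely the argument the paper intends: the paper simply states that Proposition~\ref{prop:Reach_Sets} follows by applying Definition~\ref{def:monotone} directly, and your chain $\vf(\vx^-,\vu,\vp^-)\leq\vf(\vx^-,\vu,\vp)\leq\vf(\vx,\vu,\vp)\leq\vf(\vx^+,\vu,\vp)\leq\vf(\vx^+,\vu,\vp^+)$ is the standard unpacking of that one-line justification, invoking~\eqref{eq:monotone_states} and~\eqref{eq:monotone_uncert} separately and chaining by transitivity of the elementwise order. No gaps; your attention to the intermediate evaluation points being admissible is the only bookkeeping the argument needs.
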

Proposition~\ref{prop:Reach_Sets} can be proven directly by applying Definition~\ref{def:monotone}.
The term hyperrectangle is used as a synonym for a multidimensional interval spanned by two points called the bottom left and top right corners of the hyperrectangle.
The set in~\eqref{eq:Reach_sets} is the tightest hyperrectangular outer approximation of the reachable set, as its top right and bottom left corners are the corners of the true reachable set.

\section{Robust MPC for monotone systems} \label{sec:Mon_MPC}
The computationally efficient tight outer approximation of reachable sets for a fixed input for monotone systems can be leveraged to formulate an MPC approach which is robust against uncertainties $\vp\in\sP = \left[ \vp^-, \vp^+\right]$, as done in our previous work~\cite{heinleinRobustMPCApproaches2022}.
At each time step in the prediction horizon $N$, the reachable set is approximated via a hyperrectangle according to Proposition~\ref{prop:Reach_Sets}. 
To obtain an open-loop robust MPC approach for monotone systems, a single input trajectory is applied for both extreme values of the uncertainty. The term \emph{open-loop} denotes that the method does not consider that one can react to future disturbances via future measurements in the closed-loop application~\cite{leeWorstcaseFormulationsModel1997}. %
The main advantage of the exploitation of monotonicity properties to formulate a robust MPC controller is that the number of variables and constraints 
scales only linearly in the state dimension $n_x$ as well as the prediction horizon $N$. 
However, an important drawback of the open-loop controller in~\cite{heinleinRobustMPCApproaches2022} is that it is open-loop robust, which can lead to conservative feasible domains and overly risk-averse performance~\cite{leeWorstcaseFormulationsModel1997}. 
Especially the assumption on the terminal set can be difficult to fulfill. It is required that there exists an input that counteracts the growth of the reachable set for both the maximum and the minimum realizations of the disturbance. This requirement may be impossible for unstable systems with additive disturbances, if recourse is not considered.
\begin{figure}
    \centering
    \includegraphics[width=0.44\textwidth]{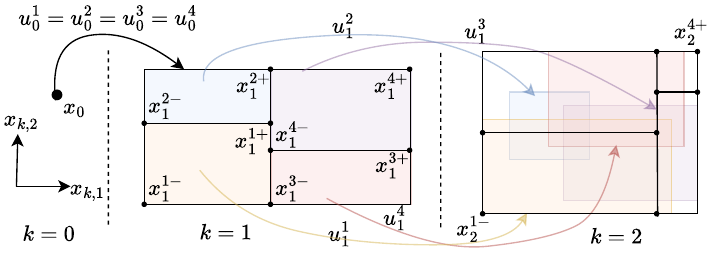}
    \caption{Schematic representation of approach~\eqref{eq:Closed_Loop} in a phase plot for multiple time steps $k$ in the prediction horizon of a 2D system with 4 subregions. Each subregion $\left[\vx_k^{s-}, \vx_k^{s+} \right]$ is propagated with the input $\vu_k^s$. The subsets in the next time step are aligned to bound all propagated sets.}
    \label{fig:scheme}
    \vspace{-2ex}
\end{figure}

To introduce recourse into the robust MPC for monotone systems and thus transforming it into a \emph{closed-loop} robust approach, we propose to divide the reachable sets that are computed via Proposition~\ref{prop:Reach_Sets} into multiple hyperrectangles, as it is shown in the left rectangle in Figure~\ref{fig:scheme}.
The exact position of the division is left as a degree of freedom to the optimizer.
A different control input is associated for each subregion, introducing feedback into the method.
In each time step, we start with one hyperrectangle, which is partitioned into multiple subregions. 
The subregions are propagated with individual inputs (arrows in Fig.~\ref{fig:scheme}) via Proposition~\ref{prop:Reach_Sets}. All propagations of the subregions are bounded by one hyperrectangle (right black rectangle in Fig.~\ref{fig:scheme}).
This bounding hyperrectangle is used as an over-approximation of the reachable set for each next time step and will again be partitioned.
The bounding prevents the exponential complexity growth typical of scenario-trees, but can lead to additional conservatism, as the corners of this bounding hyperrectangle may not be reachable as depicted in Figure~\ref{fig:scheme} ($\vx_2^{4+}$ is not reachable). 
The main advantage of the strategy is that it results in linear growth of the complexity with the prediction horizon.
The number of partitions per dimension generates a total number of subregions $\mu_s$. 
In the proposed closed-loop approach $\mathcal{P}_N^{\text{CL}}(\vx_0)$, %
presented in~\eqref{eq:Closed_Loop}, each subregion is denoted by the superscript $s\in\sS=\{1,...,\mu_s\}$.
The open-loop approach, denoted $\mathcal{P}_N^{\text{OL}}(\vx_0)$, can be recovered by setting $\mu_s=1$. The optimization problem solved at each sampling time to obtain the proposed closed-loop robust MPC for monotone systems is:
\begin{subequations} \label{eq:Closed_Loop}
\begin{align}
\noalign{$\displaystyle \min_{\vx_{\left[0:N\right]}^{s\pm},\vu_{\left[0:N-1\right]}^s,\forall s\in\sS}  \quad J(\vx_{\left[0:N\right]}^{\left[1:\mu_s\right]\pm},\vu_{\left[0:N-1\right]}^{\left[1:\mu_s\right]})$}\\
	\text{s.t}:\ &\vx_0^{s\pm}=\vx_0, \label{eq:Closed_Loop:subeq:IS}\\
    &\vx_{k+1}^{\mu_s+}\geq \vf(\vx_k^{s+},\vu_k^s,\vp^{+}),\label{eq:Closed_Loop:subeq:Bounding+}\\
	&\vx_{k+1}^{1-}\leq \vf(\vx_k^{s-},\vu_k^s,\vp^{-}),\label{eq:Closed_Loop:subeq:Bounding-}\\
	&\left[\vx_k^{1-}, \vx_k^{\mu_s+}\right]\subseteq \sX,\ \label{eq:Closed_loop:subeq:State_Constraint} \\%
	&\vu_k^{s}\in\sU,\ \\ %
	&\left[\vx_N^{1-}, \vx_N^{\mu_s+}\right]\subseteq \sX_f, \label{eq:Closed_Loop:subeq:terminal_Constr}\\
	&\vu_0^1=\vu_0^s, \label{eq:Closed_Loop:subeq:In. Input Constr}\\
    &\vh(\vx_k^{\left[1:\mu_s\right]\pm})\leq\vzero, \label{eq:Closed_Loop:subeq:dividing_RS}
\end{align}
\end{subequations}
where the constraints involving $k$ or $s$ hold $\forall k\in\{0,...,N-1\},\ \forall s \in\sS$.
For each subregion $s$ and each step $k$, two points in the state space $\vx_k^{s+},\vx_k^{s-}$, are defined to span the corresponding hyperrectangles. These points, along with the associated input $\vu_k^s$, are included in the objective $J$.
\begin{remark}
For notational compactness, constraints that hold for both the top right and bottom left corners, e.g.~\eqref{eq:Closed_Loop:subeq:IS}, are concatenated in a slight abuse of notation via $\pm$.
\end{remark}
Via~\eqref{eq:Closed_Loop:subeq:Bounding+} and~\eqref{eq:Closed_Loop:subeq:Bounding-}, the bounding hyperrectangle of the next time step is constrained to contain all reachable sets originating from the propagation of the current subregions.
The hyperrectangle spanned in the last step of the prediction horizon is constrained in~\eqref{eq:Closed_Loop:subeq:terminal_Constr} to lie within the terminal set.
In addition,~\eqref{eq:Closed_Loop:subeq:In. Input Constr} equates all initial inputs, as the initial state is fixed.
The function $\vh(\vx_k^{\left[1:\mu_s\right]\pm})$ in~\eqref{eq:Closed_Loop:subeq:dividing_RS} denotes the constraints for each partition, that is, they enforce that the subregions at each time-step in the predictions are aligned next to each other without overlap and fill the whole bounding hyperrectangle obtained from the propagation of the previous step.
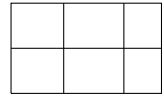
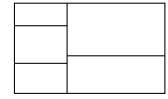
\begin{figure}
    \centering
    \begin{subfigure}{.24\textwidth}
        \centering
        \begin{tikzpicture}
            \draw (0,0) -- (2,0) -- (2,1.2) -- (0,1.2) -- (0,0);
            \draw (0.7,0) -- (0.7,1.2);
            \draw (1.5,0) -- (1.5,1.2);
            \draw (2,0.6) -- (0,0.6);
        \end{tikzpicture}
        \caption{Grid-like partitions}
        \label{fig:Cutting:subfig:Grid}
    \end{subfigure}%
    \begin{subfigure}{.24\textwidth} 
        \centering
        \begin{tikzpicture}
            \draw (0,0) -- (2,0) -- (2,1.2) -- (0,1.2) -- (0,0);
            \draw (0.7,0) -- (0.7,1.2);
            \draw (0, 0.4) -- (0.7, 0.4);
            \draw (0, 0.9) -- (0.7, 0.9);
            \draw (0.7, 0.5) -- (2, 0.5);
        \end{tikzpicture}
        \caption{Search-tree-like partitions}
        \label{fig:Cutting:subfig:Tree}
    \end{subfigure}
    \caption{Methods on dividing the reachable set} \label{fig:Cutting}
    \vspace{-2ex}
\end{figure}
The degrees of freedom in the arrangement of the subregions is slightly increased compared to~\cite{heinleinRobustMPCApproaches2022}. In~\cite{heinleinRobustMPCApproaches2022}, the grid-like partitioning sketched in Figure~\ref{fig:Cutting:subfig:Grid} was presented. In this paper, a search-tree like strategy, sketched in Figure~\ref{fig:Cutting:subfig:Tree}, is employed, which allows the independent further division of already separated subregions.
Regardless, the arrangement of the subregions is constrained in $\vh$ by constraints of the form $\vx_{k,d}^{q+}-\vx_{k,d}^{t+}$, $\vx_{k,d}^{q-}-\vx_{k,d}^{t+}$ and $\vx_{k,d}^{q-}-\vx_{k,d}^{t-}$, where $q$ and $t$ belong to subsets of $\sS$. For Figure~\ref{fig:scheme}, the constraints are
\begin{align*}
&x_{k,2}^{1-} = x_{k,2}^{3-} &\quad& x_{k,2}^{4+} = x_{k,2}^{2+} &\quad& x_{k,1}^{1-} = x_{k,1}^{2-} &\quad& x_{k,2}^{1+} = x_{k,2}^{2-}, \\
&x_{k,1}^{1+} = x_{k,1}^{2+} &\quad& x_{k,1}^{1+} = x_{k,1}^{3-} &\quad& x_{k,1}^{3-} = x_{k,1}^{4-} &\quad& x_{k,2}^{3+} = x_{k,2}^{4-}, \\
&x_{k,1}^{3+} = x_{k,1}^{4+} &\quad& x_k^{s\pm} \geq x_k^{1-}  &\quad& x_k^{4+} \geq x_k^{s\pm}, &\, &\forall s \in \sS.
\end{align*}
A thorough mathematical and algorithmical explanation is given in the supplementary repository$^\text{{\ref{URL}}}$.

The following assumptions are taken to prove recursive feasibility and robust constraint satisfaction.
\begin{assumption} \label{ass:Monotonicity}
The controlled system~\eqref{eq:system} is monotone in the states and the uncertainties, so~\eqref{eq:monotone_states} and~\eqref{eq:monotone_uncert} hold.%
\end{assumption}
\begin{assumption} \label{ass:Box_constraints}
The state constraints $\sX$ are given as box constraints, that is, as a hyperrectangle:
\begin{equation}\label{eq:Interval}
    \sX=\left[\vx^{\textnormal{min}},\vx^{\textnormal{max}}\right]=\{ \vx\in \sR^{n_x} | \vx^{\textnormal{min}}\leq \vx \leq \vx^{\textnormal{max}} \},
\end{equation}
and the uncertainty $\vp$ takes values in the hyperrectangle $\vp\in\sP=\left[\vp^-,\vp^+\right]$.
\end{assumption}
The assumptions on the terminal set, displayed in Figure~\ref{fig:RCIS}, can be relaxed compared to the open-loop approach, as recourse can be introduced in the terminal set of the closed-loop approach via the partitioning of the state space.
\begin{assumption} \label{ass:RCIS_closed_loop}
    The terminal set $\sX_f\subseteq \sX$ is a hyperrectangular robust control invariant set, spanned by $\left[\vx_{\textnormal{RCIS}}^{1-}, \vx_{\textnormal{RCIS}}^{\mu_s+}\right]$, divided into subregions $\mu_s$ that satisfy $\vh(\vx_{\textnormal{RCIS}}^{\left[1:\mu_s\right]\pm})\leq\vzero$ so that the following holds for $\left[\vx_{\textnormal{RCIS}}^{1-}, \vx_{\textnormal{RCIS}}^{\mu_s+}\right]$
    \begin{equation*}
        \exists \vu_{\textnormal{RCIS}}^{\left[1:\mu_s\right]}\in\sU : \vf(\vx_{\textnormal{RCIS}}^{s\pm},\vu_{\textnormal{RCIS}}^s,\vp^{\pm})\in \left[\vx_{\textnormal{RCIS}}^{1-},\vx_{\textnormal{RCIS}}^{\mu_s+}\right] \forall s\in\sS.
    \end{equation*}
\end{assumption}
Here, we present the assumption of a hyperrectangular terminal set for simplicity. The assumption on the shape of $\sX_f$ can be relaxed
by replacing~\eqref{eq:Closed_Loop:subeq:terminal_Constr} with the constraints
\begin{align} \label{eq:RCIS_relax}
    \left[\vx_N^{s-}, \vx_N^{s+}\right]\subseteq \left[\vx_{N-1}^{1-}, \vx_{N-1}^{\mu_s+}\right], \forall s\in \sS.
\end{align}
See Section~VI of \cite{heinleinRobustMPCApproaches2022} for details on how to compute a set that satisfies Assumption~\ref{ass:RCIS_closed_loop}.\\
The solution $\vx_{\left[0:N\right]}^{s\pm*},\vu_{\left[0:N-1\right]}^{s*}$ of the optimization problem $\mathcal{P}_N^{\text{CL}}(\vx)$ results in the control law $\pi_N^{\text{CL}}(\vx)=\vu_0^{0*}$, which leads to the following properties of the closed-loop system.
\begin{theorem} \label{thm:Closed_loop}
Let Assumptions~\ref{ass:Monotonicity},~\ref{ass:Box_constraints} and~\ref{ass:RCIS_closed_loop} hold.
The closed-loop system obtained by applying the MPC control law $\pi_N^{\textnormal{CL}}(\vx)$, obtained from~\eqref{eq:Closed_Loop}, to the system~\eqref{eq:system} satisfies the input constraints $\vu_k\in\sU$ and state constraints $\vx_k\in\sX$ for any $\vp_k\in\sP,\ \forall k\geq0$.
The MPC problem $\mathcal{P}_N^{\textnormal{CL}}(\vx_k)$ is recursively feasible.%
\end{theorem}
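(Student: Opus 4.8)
The plan is to prove both assertions simultaneously by induction on the closed-loop time index $k$, following the classical MPC template adapted to the partitioned reachable-set structure of $\mathcal{P}_N^{\textnormal{CL}}$. The induction hypothesis is that $\mathcal{P}_N^{\textnormal{CL}}(\vx_k)$ admits an optimal solution $\vx_{[0:N]}^{s\pm*},\vu_{[0:N-1]}^{s*}$. From this I would derive two facts: first, that the realized transition keeps the state and input inside their constraint sets for every $\vp_k\in\sP$; second, that $\mathcal{P}_N^{\textnormal{CL}}(\vx_{k+1})$ is again feasible. Feasibility at $k=0$ then propagates, giving robust constraint satisfaction and recursive feasibility for all $k\geq 0$.

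For the one-step part, the applied input equals the common initial input $\vu_0^{1*}\in\sU$ by~\eqref{eq:Closed_Loop:subeq:In. Input Constr}, so input feasibility is immediate. Since $\vx_0^{s\pm*}=\vx_k$ and all initial inputs coincide, the bounding constraints~\eqref{eq:Closed_Loop:subeq:Bounding+}--\eqref{eq:Closed_Loop:subeq:Bounding-} give $\vx_1^{1-*}\leq\vf(\vx_k,\vu_0^{1*},\vp^-)$ and $\vf(\vx_k,\vu_0^{1*},\vp^+)\leq\vx_1^{\mu_s+*}$. Monotonicity in the uncertainty from Assumption~\ref{ass:Monotonicity} then sandwiches the realized successor $\vx_{k+1}=\vf(\vx_k,\vu_0^{1*},\vp_k)$ between these corners, so that $\vx_{k+1}\in[\vx_1^{1-*},\vx_1^{\mu_s+*}]\subseteq\sX$ by~\eqref{eq:Closed_loop:subeq:State_Constraint}. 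This is where the monotonicity assumption and Proposition~\ref{prop:Reach_Sets} do the essential work.

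For recursive feasibility I would construct an explicit candidate for $\mathcal{P}_N^{\textnormal{CL}}(\vx_{k+1})$ by shifting the optimal solution. Because the partition constraints~\eqref{eq:Closed_Loop:subeq:dividing_RS} tile the box $[\vx_1^{1-*},\vx_1^{\mu_s+*}]$ and $\vx_{k+1}$ lies in it, some subregion $\bar{s}$ satisfies $\vx_{k+1}\in[\vx_1^{\bar{s}-*},\vx_1^{\bar{s}+*}]$. I would collapse the candidate's step-$0$ nodes onto $\vx_{k+1}$ and assign the common initial input $\vu_1^{\bar{s}*}$; monotonicity in the states together with the previous step-$1$-to-$2$ bounding constraint for $\bar{s}$ then places the successor box of $\vx_{k+1}$ inside $[\vx_2^{1-*},\vx_2^{\mu_s+*}]$, validating~\eqref{eq:Closed_Loop:subeq:Bounding+}--\eqref{eq:Closed_Loop:subeq:Bounding-} at the first node. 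The interior of the horizon is a verbatim shift: candidate nodes $1,\dots,N-2$ inherit the optimal nodes $2,\dots,N-1$ (corners, inputs, and partitions), so their bounding, state and partition constraints transfer unchanged.

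The step I expect to be the main obstacle is the terminal node, since Assumption~\ref{ass:RCIS_closed_loop} supplies invariance only for the designated RCIS partition, not for the possibly strict sub-box $[\vx_N^{1-*},\vx_N^{\mu_s+*}]$ that a naive shift would leave at the end. I would circumvent this by setting the candidate's last two boxes equal to the RCIS box $[\vx_{\textnormal{RCIS}}^{1-},\vx_{\textnormal{RCIS}}^{\mu_s+}]$ with its RCIS partition and inputs $\vu_{\textnormal{RCIS}}^s$. The only nontrivial check is the node-$(N-2)$-to-$(N-1)$ bounding constraint: the optimal solution already forces $\vf(\vx_{N-1}^{s\pm*},\vu_{N-1}^{s*},\vp^{\pm})$ into $[\vx_N^{1-*},\vx_N^{\mu_s+*}]\subseteq\sX_f$, and since $\sX_f$ coincides with the RCIS box this inclusion delivers exactly the bound on the RCIS corners that~\eqref{eq:Closed_Loop:subeq:Bounding+}--\eqref{eq:Closed_Loop:subeq:Bounding-} demand. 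Assumption~\ref{ass:RCIS_closed_loop} then keeps the final box within $\sX_f$, satisfying~\eqref{eq:Closed_Loop:subeq:terminal_Constr}, while $\vh(\vx_{\textnormal{RCIS}}^{[1:\mu_s]\pm})\leq\vzero$ supplies~\eqref{eq:Closed_Loop:subeq:dividing_RS}. This closes the induction and establishes both claims.
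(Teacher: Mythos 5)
Your proposal is correct and takes essentially the same route as the paper's proof (deferred to the earlier work, and mirrored by the proof of Theorem~\ref{thm:Closed_loop_Mixed} here): robust one-step constraint satisfaction from the bounding constraints~\eqref{eq:Closed_Loop:subeq:Bounding+}--\eqref{eq:Closed_Loop:subeq:Bounding-} plus the monotonicity sandwich, and recursive feasibility from a shifted candidate whose step-0 nodes collapse onto the measured state, whose partitions align with the previous solution so the previous inputs can be reused, and whose terminal step invokes Assumption~\ref{ass:RCIS_closed_loop}. Your terminal-step device --- enlarging the candidate's last two boxes to the full RCIS box with its partition and inputs, rather than re-partitioning the shifted terminal sub-box along the RCIS divisions as the paper's argument does --- is a minor, equally valid variant, since the bounding constraints are one-sided inequalities and $\sX_f\subseteq\sX$ makes the enlarged box admissible.
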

For the proof see Theorem 2 in~\cite{heinleinRobustMPCApproaches2022}.
\begin{remark} \label{rem:Box_constraints}
    The requirement of box state-constraints given by Assumption~\ref{ass:Box_constraints} can be relaxed. Constraint~\eqref{eq:Closed_loop:subeq:State_Constraint} can be checked by vertex enumeration for any convex $\sX$, which is not further discussed as it scales exponentially in $n_x$. For polytopic constraints, Farkas Lemma~\cite{kouvaritakisModelPredictiveControl2016} can ensure that the reachable sets lie within the state constraints with the same amount of constraints. This is omitted here for simplicity.
    \vspace{-1ex}
\end{remark}
The complexity of the approach grows linearly with the number of subregions, due to the bounding hyperrectangle. The conservatism introduced by the bounding hyperrectangle is a drawback most tube-based approaches suffer from, as the shape of the reachable set is over-approximated by a simpler shape, which is easier to propagate.
The number of subregions can be chosen dependent on the method of dividing the reachable sets.
If the reachable set is divided fully in each dimension, the number of subregions grows exponentially. However, it is often sufficient to divide the reachable sets only in some dimensions to achieve significant performance improvement, when compared to the open-loop approach $\mathcal{P}_N^{\text{OL}}(\vx)$, as is shown in the example in Section~\ref{sec:Case}.
The number of subregions is a tuning parameter of the approach. In our experience, and as shown in the case study in Section~\ref{sec:Case}, it is often sufficient to have a low number of partitions to achieve a significant performance improvement compared to an open-loop approach.
\begin{figure}
    \centering
        \includegraphics[width=0.40\textwidth]{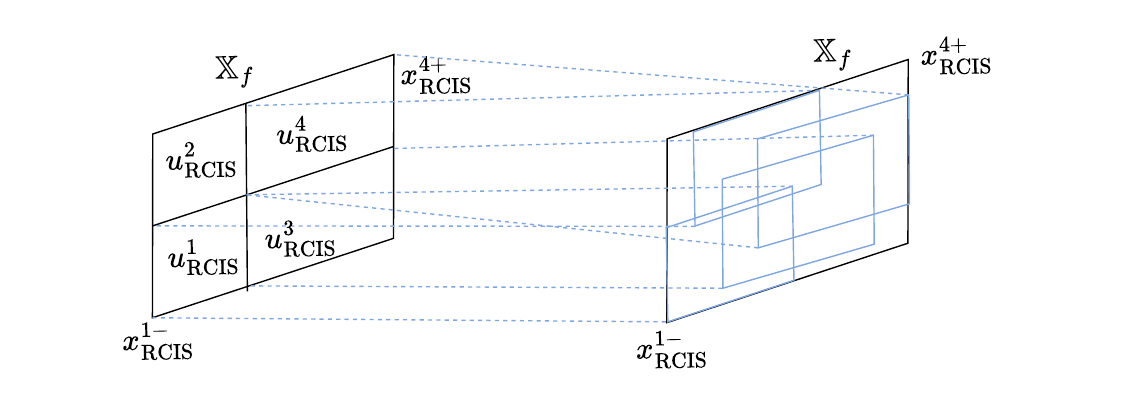}
    \caption{Visualisation of the assumption on the terminal set $\sX_f$. The rectangle $\sX_{f}=\left[x_{\text{RCIS}}^{1-},\vx_{\text{RCIS}}^{4+}\right]$ is divided and each subregion is propagated with an individual input. The propagations need to lie inside $\sX_f$.} \label{fig:RCIS}
    \vspace{-3ex}
\end{figure}
\vspace{-1ex}
\section{Extension to general systems and feedback policies} \label{sec:Mixed_Mon}
\vspace{-1ex}
The results of the previous sections
hold only when Assumption~\ref{ass:Monotonicity} holds, and the system is monotone. Through a state transformation some non-monotone systems can be made monotone~\cite{kalliesMonotonicityKineticProofreading2016, abateImprovingFidelityMixedMonotone2021}.
In addition, we extend in the following our approach to general nonlinear systems and systems under feedback policies.
\vspace{-1ex}
\subsection{Generalization via mixed-monotonicity}
To extend the approach to non-monotone systems, we exploit the concept of mixed-monotonicity, which describes that a general nonlinear system can be decomposed into monotonously increasing and decreasing parts by defining a suitable decomposition function~\cite{smithDiscreteDynamicsMonotonically2006}. This can be formally stated by the following definition from the work in~\cite{cooganMixedMonotonicityReachability2020, abateVerificationRuntimeAssurance2021}.
\vspace{-1ex}
\begin{definition}[Decomposition function of a dynamic system] \label{def:decomp}
Given a continuous function $\vd:\sX\times\sP\times\sU\times\sX\times\sP \mapsto \sR^{n_x}$ the system~\eqref{eq:system} is mixed-monotone with respect to $\vd$, if $\forall u\in\sU$
\begin{enumerate}
    \item $\vf(\vx,\vu,\vp)=\vd(\vx,\vp,\vu,\vx,\vp),\ \forall \vx\in\sX,\ \forall \vp\in\sP$,
    \item $\vd(\tilde{\vx},\tilde{\vp},\vu,\overline{\vx},\overline{\vp}) \leq \vd(\hat{\vx},\hat{\vp},\vu,\overline{\vx},\overline{\vp}),$ $\forall \tilde{\vx}, \overline{\vx}, \hat{\vx}$ $ \in\sX$, such that $\tilde{\vx}\leq\hat{\vx}$ and $\ \forall \tilde{\vp}, \overline{\vp}, \hat{\vp} \in\sP$, such that $\tilde{\vp}\leq \hat{\vp}$,
    \item $\vd(\tilde{\vx},\tilde{\vp},\vu,\breve{\vx},\breve{\vp}) \leq \vd(\tilde{\vx},\tilde{\vp},\vu,\overline{\vx},\overline{\vp}),$ $\forall \tilde{\vx},\overline{\vx},\breve{\vx}$ $\in\sX$, such that $\overline{\vx}\leq\breve{\vx}$ and $\ \forall \tilde{\vp},\overline{\vp},\breve{\vp}\in\sP$, such that $\overline{\vp}\leq \breve{\vp}$.
\end{enumerate}
\end{definition}
The concept of decomposition functions enables the simple computation of approximate reachable sets of general systems~\cite{cooganMixedMonotonicityReachability2020}:
\begin{proposition} \label{prop:Reach_Sets_Mixed_Mon}
The 1-step reachable set for any fixed input $\vu\in\sU$ of a discrete system~\eqref{eq:system} with a decomposition function $\vd$ according to Def.~\ref{def:decomp} with $\vx\in \left[\vx^-,\vx^+\right]$ and $\vp\in \left[\vp^-,\vp^+\right]$ is bounded by the multidimensional interval 
\begin{equation*} \label{eq:Reach_sets_Mixed_Monotone}
\vf(\vx,\vu,\vp) \in \left[\vd(\vx^-,\vp^-,\vu,\vx^+,\vp^+),\vd(\vx^+,\vp^+,\vu,\vx^-,\vp^-)\right].
\end{equation*}
\end{proposition}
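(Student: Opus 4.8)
The plan is to prove the claimed interval containment by bounding the true dynamics from above and below, treating each endpoint with a two-step monotonicity argument built from the three defining properties of the decomposition function in Definition~\ref{def:decomp}. Fix $\vu\in\sU$ and take any $\vx\in[\vx^-,\vx^+]$ and $\vp\in[\vp^-,\vp^+]$, so that $\vx^-\leq\vx\leq\vx^+$ and $\vp^-\leq\vp\leq\vp^+$ hold elementwise. The key reduction is the first (consistency) property, which lets me rewrite the true dynamics on the diagonal as $\vf(\vx,\vu,\vp)=\vd(\vx,\vp,\vu,\vx,\vp)$. Everything then reduces to bounding this diagonal evaluation, using that $\vd$ is increasing in its first pair of arguments and decreasing in its second pair.

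For the right endpoint I would apply the two monotonicity properties in sequence. The second (increasing) property, applied with $\vx\leq\vx^+$ and $\vp\leq\vp^+$ in the first two slots, yields
\begin{equation*}
\vd(\vx,\vp,\vu,\vx,\vp)\leq\vd(\vx^+,\vp^+,\vu,\vx,\vp),
\end{equation*}
and the third (decreasing) property, applied with $\vx^-\leq\vx$ and $\vp^-\leq\vp$ in the last two slots, yields
\begin{equation*}
\vd(\vx^+,\vp^+,\vu,\vx,\vp)\leq\vd(\vx^+,\vp^+,\vu,\vx^-,\vp^-).
\end{equation*}
Chaining these with the diagonal identity gives $\vf(\vx,\vu,\vp)\leq\vd(\vx^+,\vp^+,\vu,\vx^-,\vp^-)$.

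The left endpoint is symmetric: the increasing property with $\vx^-\leq\vx$, $\vp^-\leq\vp$ shows $\vd(\vx^-,\vp^-,\vu,\vx^+,\vp^+)\leq\vd(\vx,\vp,\vu,\vx^+,\vp^+)$, and the decreasing property with $\vx\leq\vx^+$, $\vp\leq\vp^+$ shows $\vd(\vx,\vp,\vu,\vx^+,\vp^+)\leq\vd(\vx,\vp,\vu,\vx,\vp)$; together with the diagonal identity these give $\vd(\vx^-,\vp^-,\vu,\vx^+,\vp^+)\leq\vf(\vx,\vu,\vp)$. Since $\vx$ and $\vp$ were arbitrary in the box, the stated containment follows, and it collapses to Proposition~\ref{prop:Reach_Sets} when $\vd$ does not depend on its last two arguments, i.e.\ for a genuinely monotone system.

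I expect no genuine obstacle; this is essentially the mixed-monotone reachability bound of~\cite{cooganMixedMonotonicityReachability2020}. The only point requiring care is the bookkeeping of which variable occupies which of the five argument slots, so that the correct monotonicity direction is invoked at each step: the first pair of arguments must always be driven toward $(\vx^+,\vp^+)$ for the upper bound and $(\vx^-,\vp^-)$ for the lower one, with the second pair driven in the opposite direction. Because all the defining inequalities of $\vd$ are elementwise, the vector-valued conclusion holds componentwise with no additional argument.
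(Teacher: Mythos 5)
Your proof is correct and takes essentially the same approach as the paper, which disposes of this proposition in one line by asserting it is a direct consequence of Definition~\ref{def:decomp} (with a citation to~\cite{cooganMixedMonotonicityReachability2020}); you have simply written out in full the two-step chain of the increasing/decreasing properties through the diagonal identity $\vf(\vx,\vu,\vp)=\vd(\vx,\vp,\vu,\vx,\vp)$ that the paper leaves implicit. Your slot bookkeeping in each application of properties 2 and 3 is accurate, so there is no gap.
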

Similar to Proposition~\ref{prop:Reach_Sets}, the proof of Proposition~\ref{prop:Reach_Sets_Mixed_Mon} is a direct result of Definition~\ref{def:decomp}, as $d(\vx,\vp,\vu,\cdot,\cdot)$ and $d(\cdot,\cdot,\vu,\vx,\vp)$ over-approximate the monotonic increasing and decreasing parts of~\eqref{eq:system}~\cite{cooganMixedMonotonicityReachability2020}.
Our proposed method in this work uses the idea of mixed-monotonicity to formulate an open-loop or closed-loop robust MPC approach as the one in~\eqref{eq:Closed_Loop} that scales only linearly with the prediction horizon and the number of states and does not scale with the number of uncertainties.
To formulate the approach for general systems, the following is assumed.
\begin{assumption} \label{ass:decomp}
The function $\vd$ is a decomposition function according to Definition~\ref{def:decomp} of the system~\eqref{eq:system}. Additionally, there exists a hyperrectangular terminal set satisfying Assumption~\ref{ass:RCIS_closed_loop} with respect to the decomposition function instead of the system function, that is:
\begin{equation*}
        \exists \vu_{\textnormal{RCIS}}^{\left[1:\mu_s\right]}\in\sU : \vd(\vx_{\textnormal{RCIS}}^{s\pm},\vp^{\pm},\vu_{\textnormal{RCIS}}^s,\vx_{\textnormal{RCIS}}^{s \mp},\vp^{\mp})\in \sX_f,\  \forall s\in\sS.
    \end{equation*}
\end{assumption}
We introduce the problem $\mathcal{P}_N^{\text{CL}_{\vd}}(\vx_0)$ for general systems with a decomposition function $\vd$
\begin{subequations} \label{eq:Closed_Loop_Tube}
\begin{align}
\noalign{$\displaystyle \min_{\vx_{\left[0:N\right]}^{s\pm},\vu_{\left[0:N-1\right]}^s,\forall s\in\sS}  \quad J(\vx_{\left[0:N\right]}^{\left[1:\mu_s\right]\pm},\vu_{\left[0:N-1\right]}^{\left[1:\mu_s\right]})$}\ \\
	\text{s.t}:\ &\vx_0^{s\pm}=\vx_0, \label{eq:Closed_Loop_Tube:IS}\\
    &\vx_{k+1}^{\mu_s+}\geq \vd(\vx_k^{s+},\vp^{+},\vu_k^s,\vx_k^{s-},\vp^{-}),\label{eq:Closed_Loop_Tube:subeq:Bounding+}\\
	&\vx_{k+1}^{1-}\leq \vd(\vx_k^{s-},\vp^{-},\vu_k^s,\vx_k^{s+},\vp^{+}),\label{eq:Closed_Loop_Tube:subeq:Bounding-}\\
	&\left[\vx_k^{1-}, \vx_k^{\mu_s+}\right]\subseteq \sX,\ \label{eq:Closed_Loop_Tube:subeq:State_Constraint} \\%
	&\vu_k^{s}\in\sU,\ \label{eq:Closed_Loop_Tube:Input_Constraint} \\
	&\left[\vx_N^{1-}, \vx_N^{\mu_s+}\right]\subseteq \sX_f, \label{eq:Closed_Loop_Tube:subeq:terminal_Constr}\\
	&\vu_0^1=\vu_0^s, \label{eq:Closed_Loop_Tube:subeq:In. Input Constr}\\
    &\vh(\vx_k^{\left[1:\mu_s\right]\pm})\leq\vzero, \label{eq:Closed_Loop_Tube:subeq:dividing_RS}
\end{align}
\end{subequations}
where the constraints involving $k$ or $s$ hold $\forall k\in\{0,...,N-1\},\ \forall s \in\sS$.
Naturally, it is also possible to use no partitioning, i.e. $\mu_s=1$, in~\eqref{eq:Closed_Loop_Tube}, which will be denoted as $\mathcal{P}_N^{\text{OL}_{\vd}}(\vx_0)$. 
The solution $\vx_{\left[0:N\right]}^{s\pm*},\vu_{\left[0:N-1\right]}^{s*}$ of the optimization problem $\mathcal{P}_N^{\text{CL}_{\vd}}(\vx)$ results in the control law $\pi_N^{\text{CL}_{\vd}}(\vx)=\vu_0^{0*}$, which leads to the following properties of the closed-loop system.
\begin{theorem} \label{thm:Closed_loop_Mixed}
Let Assumptions~\ref{ass:Box_constraints} and~\ref{ass:decomp} hold for system~\eqref{eq:system}.
The closed-loop system obtained by applying the MPC control law $\vu_k=\pi_N^{\textnormal{CL}_{\vd}}(\vx_k)$, obtained from~\eqref{eq:Closed_Loop_Tube}, to the system~\eqref{eq:system} satisfies the input constraints $\vu_k\in\sU$ and state constraints $\vx_k\in\sX$ for any $\vp_k\in\sP,\ \forall k\geq0$.
The MPC problem $\mathcal{P}_N^{\textnormal{CL}_{\vd}}(\vx_k)$ is recursively feasible.
\end{theorem}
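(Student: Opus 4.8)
The plan is to replay the proof of Theorem~\ref{thm:Closed_loop} essentially verbatim, with the single structural substitution of the mixed-monotone reachable-set bound of Proposition~\ref{prop:Reach_Sets_Mixed_Mon} in place of the monotone bound of Proposition~\ref{prop:Reach_Sets}, and of the decomposition-based terminal invariance in Assumption~\ref{ass:decomp} in place of Assumption~\ref{ass:RCIS_closed_loop}. Concretely, wherever the monotone argument propagated a corner through $\vf(\cdot,\vu,\vp^{\pm})$, the mixed-monotone argument propagates it through $\vd(\cdot,\vp^{\pm},\vu,\cdot,\vp^{\mp})$; the monotonicity properties~2 and~3 of Definition~\ref{def:decomp} play exactly the role that elementwise monotonicity of $\vf$ played before. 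The argument splits into robust constraint satisfaction along the realized closed loop and recursive feasibility of a candidate solution.

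For constraint satisfaction, I would first show that a feasible solution of $\mathcal{P}_N^{\text{CL}_{\vd}}(\vx_k)$ confines the true successor state to the step-one bounding box for every admissible disturbance. Since the initial box is the single point $\vx_0^{s\pm}=\vx_k$, Proposition~\ref{prop:Reach_Sets_Mixed_Mon} bounds the one-step reachable set under the applied input $\vu_0^{1*}$ by $\left[\vd(\vx_k,\vp^-,\vu_0^{1*},\vx_k,\vp^+),\vd(\vx_k,\vp^+,\vu_0^{1*},\vx_k,\vp^-)\right]$, and the bounding constraints~\eqref{eq:Closed_Loop_Tube:subeq:Bounding+}--\eqref{eq:Closed_Loop_Tube:subeq:Bounding-} force this interval into $\left[\vx_1^{1-*},\vx_1^{\mu_s+*}\right]$. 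Hence $\vx_{k+1}=\vf(\vx_k,\vu_0^{1*},\vp_k)$ lies in that box, which by~\eqref{eq:Closed_Loop_Tube:subeq:State_Constraint} lies in $\sX$; together with $\vu_0^{1*}\in\sU$ from~\eqref{eq:Closed_Loop_Tube:Input_Constraint} this yields satisfaction at the applied step, and iterating over $k$ gives $\vx_k\in\sX$, $\vu_k\in\sU$ for all $k\geq0$ once recursive feasibility is in place.

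For recursive feasibility I would exhibit an explicit feasible candidate for $\mathcal{P}_N^{\text{CL}_{\vd}}(\vx_{k+1})$ by a shift-and-append of the previous optimizer. Because $\vx_{k+1}$ lies in some subregion $\bar s$ of the step-one partition, I set the new initial point to $\vx_{k+1}$ with common input $\vu_1^{\bar s*}$, inherit the boxes, partitions and inputs of old steps $2,\dots,N$ as new steps $1,\dots,N-1$, enlarge the final box to the terminal hyperrectangle $\sX_f$ with its robust-control-invariant partition, and close the horizon with the terminal inputs $\vu_{\textnormal{RCIS}}^{\left[1:\mu_s\right]}$. The interior bounding constraints hold because they held for the old optimizer; the initial seam holds because properties~2--3 of Definition~\ref{def:decomp} give $\vd(\vx_{k+1},\vp^+,\vu_1^{\bar s*},\vx_{k+1},\vp^-)\leq\vd(\vx_1^{\bar s+*},\vp^+,\vu_1^{\bar s*},\vx_1^{\bar s-*},\vp^-)\leq\vx_2^{\mu_s+*}$ (and symmetrically for the lower corner); and the terminal seam holds by Assumption~\ref{ass:decomp}, since the incoming reachable set already lies in $\sX_f\supseteq\left[\vx_N^{1-*},\vx_N^{\mu_s+*}\right]$. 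The partition constraints~\eqref{eq:Closed_Loop_Tube:subeq:dividing_RS} hold stepwise as each step reuses an admissible partition (the degenerate initial one being admissible exactly as in~\eqref{eq:Closed_Loop_Tube:IS}), and the initial-input coupling is trivial because the new first step is a single point.

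I expect the main obstacle to lie in the two seams of the candidate rather than in any global estimate. At the initial seam one must track the four separate state/parameter slots of $\vd$ and verify that collapsing the previous subregion $\bar s$ to the point $\vx_{k+1}$ can only shrink the propagated interval, which is precisely where the two-sided monotonicity of the decomposition function is indispensable and where intuition carried over from the single-valued $\vf$ can mislead. At the terminal seam one must justify enlarging the predicted step-$N$ box to the full invariant hyperrectangle $\sX_f$ so that Assumption~\ref{ass:decomp} applies, while checking that this enlargement preserves the incoming bounding inequality. Once these two monotonicity checks are discharged, feasibility of the candidate follows, establishing recursive feasibility and, with it, robust constraint satisfaction.
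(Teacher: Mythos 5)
Your proposal is correct and follows essentially the same route as the paper's proof: robust constraint satisfaction from Proposition~\ref{prop:Reach_Sets_Mixed_Mon} together with Assumption~\ref{ass:Box_constraints}, and recursive feasibility from a shift-and-append candidate whose horizon end is closed by the terminal invariance of Assumption~\ref{ass:decomp}, with the seam inequalities discharged exactly by properties 2--3 of Definition~\ref{def:decomp}. The only deviation is minor and valid: at the terminal seam the paper re-partitions the predicted step-$N$ box inside $\sX_f$ (any hyperrectangle contained in $\sX_f$ admits such a division, again by monotonicity of $\vd$), whereas you enlarge that box to all of $\sX_f$ and reuse the RCIS partition and inputs verbatim, which works because the bounding constraints~\eqref{eq:Closed_Loop_Tube:subeq:Bounding+}--\eqref{eq:Closed_Loop_Tube:subeq:Bounding-} are one-sided containments.
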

\begin{proof}
    According to Assumption~\ref{ass:decomp} and Proposition~\ref{prop:Reach_Sets_Mixed_Mon}, the reachable sets of the system for each time step $k$ and subregion $s$ $\left[\vx_k^{s-}, \vx_k^{s+}\right]$ are bounded by $\left[\vx_{k+1}^{1-},\vx_{k+1}^{\mu_s+}\right]$ via~\eqref{eq:Closed_Loop_Tube:subeq:Bounding+} and~\eqref{eq:Closed_Loop_Tube:subeq:Bounding-}. Since Assumption~\ref{ass:Box_constraints} holds, the state constraints can be checked via the corners of the hyperrectangle $\left[\vx_k^{1-}, \vx_k^{\mu_s+}\right]$ as in~\eqref{eq:Closed_Loop_Tube:subeq:State_Constraint}. The input constraints are satisfied via~\eqref{eq:Closed_Loop_Tube:Input_Constraint}.  Therefore, state and input constraint satisfaction is guaranteed.
    Recursive feasibility can be shown analogously to the proof in~\cite{heinleinRobustMPCApproaches2022}, as at the next time steps the propagated reachable sets will lie within the shifted reachable sets from the previous iteration. The division of the reachable sets into subregions can always align with the previous divisions, such that the previous computed inputs can be reused. At the end of the prediction horizon, Assumption~\ref{ass:decomp} guarantees robust invariance, such that for any hyperrectangular reachable set inside the terminal set, there exists a division into subregions with respective inputs, such that the bounding interval over all propagations lies in the terminal set as well. This specific partition and set of inputs are used at the end of the prediction horizon of the candidate feasible solution. Thus recursive feasibility is guaranteed.
\end{proof}
The goal of this work is to study recursive feasibility and robust constraint satisfaction. The input to state stability of the open-loop approach without the state partitioning can be inferred from~\cite{limonInputtoStateStabilityUnifying2009}. The extension to establish the stability properties of the presented closed-loop approach can be based on the ideas from~\cite{luciaStabilityPropertiesMultistage2020,subramanianTubeenhancedMultistageMPC2021} and is left as future work.

The price to pay for this generalization to any nonlinear system is that the reachable sets obtained are prone to the wrapping effect and obtaining a differentiable decomposition function can be difficult and case specific for general nonlinear systems. The wrapping effect occurs because the top right and bottom left corners of the hyperrectangular over-approximation of the reachable sets, which are further propagated, may lie outside the true reachable set.
Methods using more complex shapes of the reachable set, like ellipsoids~\cite{houskaRobustOptimizationNonlinear2012} or zonotopes~\cite{bravoRobustMPCConstrained2006} can partly avoid the wrapping effect at the cost of more complex schemes and more difficult to design boundaries for the reachable sets.

Regarding the feasibility of Assumption~\ref{ass:decomp}, it is possible to define the tightest possible decomposition function of a general dynamic system, as presented in~\cite{yangTightDecompositionFunctions2019}.
\begin{proposition} \label{prop:tight_decomp}
Any system~\eqref{eq:system} is mixed-monotone with respect to a decomposition function $\vd$ satisfying $\forall i=1,\dots,n_x$
\begin{multline} \label{eq:tight_decomp}
    \evd_i(\tilde{\vx},\tilde{\vp},\vu,\hat{\vx},\hat{\vp})= \\
    \begin{cases}
    \min\limits_{
    \substack{{\overline{\vx}\in\left[\tilde{\vx},\hat{\vx}\right], \overline{\vp}\in\left[\tilde{\vp},\hat{\vp} \right]}}} \evf_i(\overline{\vx},\vu,\overline{\vp}),\ \textnormal{if}\ \begin{bmatrix}
    \tilde{\vx} \\
    \tilde{\vp}
    \end{bmatrix} \leq \begin{bmatrix}
    \hat{\vx}\\
    \hat{\vp}
    \end{bmatrix} \\%
    \max\limits_{
    \substack{{\overline{\vx}\in\left[\hat{\vx},\tilde{\vx}\right], \overline{\vp}\in\left[\hat{\vp},\tilde{\vp} \right]}}} \evf_i(\overline{\vx},\vu,\overline{\vp}),\ \textnormal{if}\ \begin{bmatrix}
    \hat{\vx} \\
    \hat{\vp}
    \end{bmatrix} \leq \begin{bmatrix}
    \tilde{\vx}\\
    \tilde{\vp}
    \end{bmatrix} \\%
    \end{cases}
\end{multline}
For the continuous counterpart see~\cite{cooganMixedMonotonicityReachability2020}.
\end{proposition}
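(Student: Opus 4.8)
The plan is to verify directly that the piecewise map $\vd$ in~\eqref{eq:tight_decomp} satisfies the three defining properties of a decomposition function in Definition~\ref{def:decomp}. To streamline the case analysis it helps to stack the state and parameter into a single ordered variable $\vy=(\vx,\vp)$ and to write $\vg(\vy,\vu)=\vf(\vx,\vu,\vp)$, so that each component reads
\begin{equation*}
\evd_i(\vy^1,\vu,\vy^2)=
\begin{cases}
\min_{\vy\in[\vy^1,\vy^2]}\evg_i(\vy,\vu), & \vy^1\leq\vy^2,\\
\max_{\vy\in[\vy^2,\vy^1]}\evg_i(\vy,\vu), & \vy^2\leq\vy^1,
\end{cases}
\end{equation*}
where $[\vy^a,\vy^b]$ denotes the order interval with corners $\vy^a\leq\vy^b$. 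Because $\vf$ is continuous and each box is compact, the extreme value theorem guarantees that both the minimum and the maximum are attained, so $\vd$ is well defined. Property~1 is then immediate: setting $\vy^1=\vy^2=\vy$ collapses the box $[\vy,\vy]$ to the singleton $\{\vy\}$, whence both branches return $\evg_i(\vy,\vu)=\evf_i(\vx,\vu,\vp)$ and thus $\vd(\vx,\vp,\vu,\vx,\vp)=\vf(\vx,\vu,\vp)$.

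Next I would prove property~2, monotonicity in the first corner: fixing $\vy^2$ and taking $\vy^1\leq\vy^3$, I claim $\evd_i(\vy^1,\vu,\vy^2)\leq\evd_i(\vy^3,\vu,\vy^2)$. I would split into three cases according to where $\vy^1$ and $\vy^3$ sit relative to $\vy^2$. If both lie below $\vy^2$ (so $\vy^1\leq\vy^3\leq\vy^2$), both values are minima of $\evg_i$ over the nested boxes $[\vy^3,\vy^2]\subseteq[\vy^1,\vy^2]$, and the minimum over the larger box is no larger. If both lie above $\vy^2$ (so $\vy^2\leq\vy^1\leq\vy^3$), both values are maxima over $[\vy^2,\vy^1]\subseteq[\vy^2,\vy^3]$, and the maximum over the larger box is no smaller. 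The only remaining possibility is the crossing case $\vy^1\leq\vy^2\leq\vy^3$, where $\evd_i(\vy^1,\vu,\vy^2)$ is a minimum over $[\vy^1,\vy^2]$ while $\evd_i(\vy^3,\vu,\vy^2)$ is a maximum over $[\vy^2,\vy^3]$; since the shared corner $\vy^2$ lies in both boxes, I bridge through $\evg_i(\vy^2,\vu)$, which is at least the former minimum and at most the latter maximum. Property~3, the decreasing monotonicity in the second corner, then follows by the mirror-image argument, exchanging the roles of minima and maxima and of the nested inclusions.

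The main obstacle is twofold. The delicate point in the verification is the crossing case: this is precisely where the piecewise definition could fail, and the argument succeeds only because the fixed corner $\vy^2$ is common to both order intervals and acts as a bridge; keeping every inequality oriented correctly across the sub-cases is the principal bookkeeping. Second, and more structurally, the formula~\eqref{eq:tight_decomp} prescribes $\vd$ only on comparable corner pairs, whereas Definition~\ref{def:decomp} quantifies over the full domain $\sX\times\sP\times\sU\times\sX\times\sP$. For the reachability construction of Proposition~\ref{prop:Reach_Sets_Mixed_Mon} only the comparable evaluations $\vd(\vx^-,\vp^-,\vu,\vx^+,\vp^+)$ and $\vd(\vx^+,\vp^+,\vu,\vx^-,\vp^-)$ appear, so the comparable case already suffices there; a fully general verification would extend $\vd$ to incomparable corners via the coordinatewise mixed min/max of~\cite{yangTightDecompositionFunctions2019, cooganMixedMonotonicityReachability2020} and repeat the case analysis one coordinate of $\vy$ at a time.
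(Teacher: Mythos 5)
The paper does not actually prove Proposition~\ref{prop:tight_decomp}; it imports the result by citation from~\cite{yangTightDecompositionFunctions2019} (with~\cite{cooganMixedMonotonicityReachability2020} for the continuous-time counterpart), so your direct verification is a genuinely different and more self-contained route. Your core argument is sound: stacking $\vy=(\vx,\vp)$, property~1 follows from the singleton box; the two nested-box cases give property~2 when $\vy^1,\vy^3$ lie on the same side of $\vy^2$; and the crossing case $\vy^1\leq\vy^2\leq\vy^3$ is correctly bridged through the shared corner via $\min_{[\vy^1,\vy^2]}\evg_i\leq\evg_i(\vy^2,\vu)\leq\max_{[\vy^2,\vy^3]}\evg_i$, which is exactly the step where the piecewise min/max structure is exercised; property~3 is indeed the mirror image (with the bridge now passing through $\vy^1$). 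What your proof buys is an elementary, citation-free argument covering everything the paper actually uses, since every evaluation of $\vd$ in the paper --- Proposition~\ref{prop:Reach_Sets_Mixed_Mon}, constraints~\eqref{eq:Closed_Loop_Tube:subeq:Bounding+}--\eqref{eq:Closed_Loop_Tube:subeq:Bounding-}, and Assumption~\ref{ass:decomp} --- occurs at comparable corners $(\vx^-,\vp^-)\leq(\vx^+,\vp^+)$. What the citation buys is the resolution of the issue you yourself flag: formula~\eqref{eq:tight_decomp} leaves $\vd$ undefined on incomparable corner pairs, whereas Definition~\ref{def:decomp} quantifies over all of $\sX\times\sP\times\sU\times\sX\times\sP$, so a literal proof of the proposition as stated must additionally extend $\vd$ to incomparable arguments (e.g., by the coordinatewise min--max construction of~\cite{yangTightDecompositionFunctions2019}) and repeat your case analysis coordinatewise. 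Your proposal is therefore complete for the comparable-corner restriction the paper relies on, and correctly identifies both the gap in the literal statement and the standard way to close it.
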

The decomposition function of monotone systems is the system function $f(\tilde{\vx},\vu,\tilde{\vp})$ itself.

To circumvent non-differentiabilities in the decomposition function for general nonlinear systems, which would lead to difficulties in a gradient based optimization scheme,~\eqref{eq:tight_decomp} can be relaxed by taking the minimum and maximum of individual terms similar to interval arithmetics~\cite{alamoRobustMPCConstrained2005}. 
\begin{remark} \label{rem:linear_decomp}
    For any linear dynamical system of the form $\vx_{k+1}=\mA\vx_k+\mB \vu + \mE\vp_k$, the decomposition function can be obtained by partitioning the matrices into positive and negative parts
    \begin{equation*}
    \evd(\tilde{\vx},\tilde{\vp},\vu,\hat{\vx},\hat{\vp}) = \mA^+ \tilde{\vx} + \mA^- \hat{\vx} + \mE^+ \tilde{\vp} + \mE^- \hat{\vp} + \mB \vu,
\end{equation*}
    where $\mA^+ +\mA^- =\mA$ and $\mA^+ \geq \vzero \geq \mA^-$ elementwise (analogously for $\mE$)~\cite{cooganMixedMonotonicityReachability2020}. 
\end{remark}
For nonlinear systems,~\eqref{eq:tight_decomp} can be relaxed similarly. For example, if the $i$-th element of~\eqref{eq:system} is the sum of the terms $\evf_i^a$ and $\evf_i^b$, the following decomposition function $\breve{\evd}_i$ still satisfies the requirements of Definition~\ref{def:decomp} when \(\tilde{\vx} \leq \hat{\vx}\) and \(\tilde{\vp} \leq \hat{\vp}\):
\begin{multline*}
    \breve{\evd}_i(\tilde{\vx},\tilde{\vp},\vu,\hat{\vx},\hat{\vp}) = \\
    \min\limits_{
    \substack{{\overline{\vx}\in\left[\tilde{\vx},\hat{\vx}\right], \overline{\vp}\in\left[\tilde{\vp},\hat{\vp} \right]}}} \evf_i^a(\overline{\vx},\vu,\overline{\vp}) + \min\limits_{
    \substack{{\overline{\vx}\in\left[\tilde{\vx},\hat{\vx}\right], \overline{\vp}\in\left[\tilde{\vp},\hat{\vp} \right]}}} \evf_i^b(\overline{\vx},\vu,\overline{\vp}) \\
    \leq  \min\limits_{\substack{{\overline{\vx}\in\left[\tilde{\vx},\hat{\vx}\right], \overline{\vp}\in\left[\tilde{\vp},\hat{\vp} \right]}}} \evf_i^a(\overline{\vx},\vu,\overline{\vp})+\evf_i^b(\overline{\vx},\vu,\overline{\vp})\\
    =\evd_i(\tilde{\vx},\tilde{\vp},\vu,\hat{\vx},\hat{\vp}),
\end{multline*}%
where the decomposition for the max operator is analogous when \(\hat{\vx} \leq \tilde{\vx}\) and \(\hat{\vp} \leq \tilde{\vp}\).
If $\evf_i^a$ and $\evf_i^b$ do not change signs, this can also be applied for the multiplication and division of $\evf_i^a$ and $\evf_i^b$. Decomposing the system function into individual terms, until the min and max operations become trivial leads to more conservative, but differentiable decomposition functions.
This is systematically applied in the nonlinear case-study presented in Section~\ref{sec:Case}, demonstrating that Assumption~\ref{ass:Monotonicity} is not needed anymore, widening the range of applicability of the proposed MPC approaches beyond monotone systems without additional computational complexity.

\subsection{Relation to other robust MPC approaches - general feedback policies}

Interestingly, this framework can be equivalent to well known tube-based approaches. 
Low complexity tube-based MPC, as presented in~\cite{kouvaritakisModelPredictiveControl2016}, propagates the tubes via the decomposition function for the linear system under affine feedback as in Remark~\ref{rem:linear_decomp}. 
This motivates the following consideration of general feedback policies in the mixed-monotone robust MPC approach for further possibilities to include recourse.

 The system function $\vf$ can already represent a system under a feedback policy $\vu=\kappa(\vx,\vv)$, so $\vf(\vx,\vu,\vp)=\vf_{\kappa}(\vx,\vv,\vp)$, where $\vv$ are parameters influencing the policy in every timestep, for example a feedforward term.
 The feedback policy can either be chosen to stabilize the system or to improve the tightness of a decomposition function by canceling non-monotone terms.

For simplicity, we denote in the following the state and input constraints $\vx \in \sX$, $\vu \in \sU$ as $\vg(\vx,\vu) \leq \vzero$. 
To account for the coupling of the state and input constraints via $\kappa(\vx,\vv)$, we need following assumption on the function $\vg(\vx,\kappa(\vx,\vv))$.
\begin{assumption} \label{ass:constraint_function}
    The function $\vd_{\vg}(\vx^1,\vv,\vx^2)$ is a decomposition function for  $\vg(\vx,\kappa(\vx,\vv)) $ with respect to $\vx$ and the feedback law $\kappa(\vx,\vv)$  according to Definition~\ref{def:decomp} without $\vp$.
\end{assumption}
To ensure constraint satisfaction for all  values that $\vu$ and $\vx$ can take within the reachable set for a fixed $\vv$, $\vd_{\vg}$ decomposes $\vg$ into increasing and decreasing parts with respect to $\vx$. Similar to Assumption~\ref{ass:decomp}, the existence of $\vd_{\vg}$ is guaranteed according to Proposition~\ref{prop:tight_decomp}. For polytopic constraints with affine feedback, $\vd_{\vg}$ can be obtained as in Remark~\ref{rem:linear_decomp}.\\
\begin{corollary}
Let Assumptions~\ref{ass:decomp} and~\ref{ass:constraint_function} hold for system~\eqref{eq:system} with respect to the feedback law $\kappa(\vx,\vv) $. 
The closed-loop system obtained by applying the MPC control law $\vu_k=\kappa(\vx_k,\vv_0^{0*})$, obtained from~\eqref{eq:Closed_Loop_Tube}  when using the decomposition function with respect to the system under feedback in~\eqref{eq:Closed_Loop_Tube:subeq:Bounding+} and~\eqref{eq:Closed_Loop_Tube:subeq:Bounding-} as well as replacing~\eqref{eq:Closed_Loop_Tube:subeq:State_Constraint} and~\eqref{eq:Closed_Loop_Tube:Input_Constraint} with $\vd_{\vg}(\vx_k^{\mu_s +},\vv_k^s,\vx_k^{1-})\leq\vzero$, to the system~\eqref{eq:system} satisfies the constraint $\vg(\vx_k,\vu_k)\leq0$ for any $\vp_k\in\sP,\ \forall k\geq0$.
The MPC problem is recursively feasible.
\end{corollary}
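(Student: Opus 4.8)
The plan is to reduce the statement to the already-proved Theorem~\ref{thm:Closed_loop_Mixed} by regarding the system under the fixed feedback structure as a fresh system of the form~\eqref{eq:system}. First I would define the closed-loop dynamics $\vf_{\kappa}(\vx,\vv,\vp)=\vf(\vx,\kappa(\vx,\vv),\vp)$ and treat $\vv$ as the decision variable in place of $\vu$. Assumption~\ref{ass:decomp}, invoked with respect to this system under feedback, supplies a decomposition function $\vd$, so Proposition~\ref{prop:Reach_Sets_Mixed_Mon} applies verbatim: for each subregion the one-step reachable set of $\vf_{\kappa}$ is over-approximated by the hyperrectangle whose corners are $\vd(\vx_k^{s-},\vp^{-},\vv_k^s,\vx_k^{s+},\vp^{+})$ and $\vd(\vx_k^{s+},\vp^{+},\vv_k^s,\vx_k^{s-},\vp^{-})$. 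These are exactly the quantities appearing in~\eqref{eq:Closed_Loop_Tube:subeq:Bounding+} and~\eqref{eq:Closed_Loop_Tube:subeq:Bounding-} once $\vd$ is taken with respect to the system under feedback, so the bounding hyperrectangle $[\vx_{k+1}^{1-},\vx_{k+1}^{\mu_s+}]$ contains every state reachable at the next step. Since the initial state is fixed, all subregions at $k=0$ collapse to the measured $\vx_0$ and the initial-input equality forces a single policy parameter, so the applied control $\vu_0=\kappa(\vx_0,\vv_0^{0*})$ is well defined.

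The step that departs from Theorem~\ref{thm:Closed_loop_Mixed}, and the main obstacle, is constraint satisfaction, because the realized input $\vu_k=\kappa(\vx_k,\vv_k)$ now depends on the actual state, which ranges over the reachable hyperrectangle rather than being a free variable; the box-constraint argument no longer decouples $\vx$ and $\vu$. Here I would invoke Assumption~\ref{ass:constraint_function}: since $\vd_{\vg}$ is a decomposition function for $\vg(\vx,\kappa(\vx,\vv))$ with respect to $\vx$ in the sense of Definition~\ref{def:decomp}, its monotonicity in the leading and trailing state arguments yields, for every $\vx$ in the bounding hyperrectangle,
\begin{equation*}
\vg(\vx,\kappa(\vx,\vv_k^s))\leq \vd_{\vg}(\vx_k^{\mu_s+},\vv_k^s,\vx_k^{1-}),\quad \forall \vx\in[\vx_k^{1-},\vx_k^{\mu_s+}].
\end{equation*}
Consequently, enforcing $\vd_{\vg}(\vx_k^{\mu_s+},\vv_k^s,\vx_k^{1-})\leq\vzero$ guarantees $\vg(\vx_k,\vu_k)\leq\vzero$ for every state realized inside the reachable set and every $\vp_k\in\sP$, which is precisely the coupled state--input constraint. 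This over-approximation at the corners of the bounding hyperrectangle is the correct replacement for the separate checks~\eqref{eq:Closed_Loop_Tube:subeq:State_Constraint} and~\eqref{eq:Closed_Loop_Tube:Input_Constraint}.

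Finally, for recursive feasibility I would reuse the candidate-solution construction from the proof of Theorem~\ref{thm:Closed_loop_Mixed}, now expressed in the policy parameters $\vv$. Given a feasible solution at time $k$, the shifted sequence stays feasible at $k+1$: the propagated reachable sets lie within the shifted bounding hyperrectangles by Proposition~\ref{prop:Reach_Sets_Mixed_Mon}, the subregion partition can be aligned with the previous one so the previously computed $\vv$'s remain admissible, and the terminal condition in Assumption~\ref{ass:decomp} provides, for any hyperrectangle contained in $\sX_f$, a partition and policy parameters whose propagation under $\vd$ stays in $\sX_f$. Combining the propagated constraint bound of the previous paragraph with this terminal invariance closes the induction. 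The only genuinely new verification relative to Theorem~\ref{thm:Closed_loop_Mixed} is the constraint step via $\vd_{\vg}$; the reachability and terminal arguments carry over unchanged once $\vf$ is replaced by $\vf_{\kappa}$.
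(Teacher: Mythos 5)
Your proof is correct and takes essentially the same route as the paper, whose own proof is a one-line reduction to Theorem~\ref{thm:Closed_loop_Mixed} stating that $\vd_{\vg}$ guarantees satisfaction of the coupled state--input constraints. Your explicit monotonicity chain $\vg(\vx,\kappa(\vx,\vv))=\vd_{\vg}(\vx,\vv,\vx)\leq\vd_{\vg}(\vx_k^{\mu_s+},\vv,\vx)\leq\vd_{\vg}(\vx_k^{\mu_s+},\vv,\vx_k^{1-})$ is exactly the detail the paper leaves implicit, and the reachability and shifted-candidate recursive-feasibility arguments carry over from Theorem~\ref{thm:Closed_loop_Mixed} just as you describe.
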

\begin{proof}
    The proof is analogous to Theorem~\ref{thm:Closed_loop_Mixed}, as $\vd_g$ guarantees the satisfaction of the coupled state and input constraints.
\end{proof}
This formulation further generalizes the presented approach. 
It is thus possible to guarantee robust constraint satisfaction and recursive feasibility for general systems that include a feedback law, as done in tube-based methods, or multiple feed-forward terms as done in~\eqref{eq:Closed_Loop}, or a combination of both. 
Interval arithmetic and differential inclusion methods can provide decomposition functions. Thus, the proposed formulation with general feedback policies extends approaches like~\cite{alamoRobustMPCConstrained2005}, introducing additional flexibility through partitioning strategies.
In comparison to linearization-based ellipsoidal
methods~\cite{houskaRobustOptimizationNonlinear2012, messererEfficientAlgorithmTubebased2021}, whose guarantees rely on ellipsoidal nonlinearity bounds, the proposed method exhibits linear scalability with the number of states. Moreover, unlike~\cite{kohlerComputationallyEfficientRobust2021}, it avoids the computation of Lyapunov-function-based nonlinearity bounds.
The primary advantage lies in its computational efficiency, as the complexity remains independent of the number of uncertainties. In contrast, scenario-based approaches~\cite{luciaMultistageNonlinearModel2013c}, while requiring minimal setup effort~\cite{fiedlerDompcFAIRNonlinear2023}, exhibit exponential growth in complexity with increasing uncertainties, rendering them unsuitable for the following case study.

\section{Nonlinear, non-monotone CSTR Cascade} \label{sec:Case}
To examine the scalability of the mixed-monotone robust MPC framework with the state dimension, a nonlinear, non-monotone model of a cascade of $n_R$ continuous stirred tank reactors (CSTR) is examined.
\begin{figure*}
    \centering
    \includegraphics[width=0.98\textwidth]{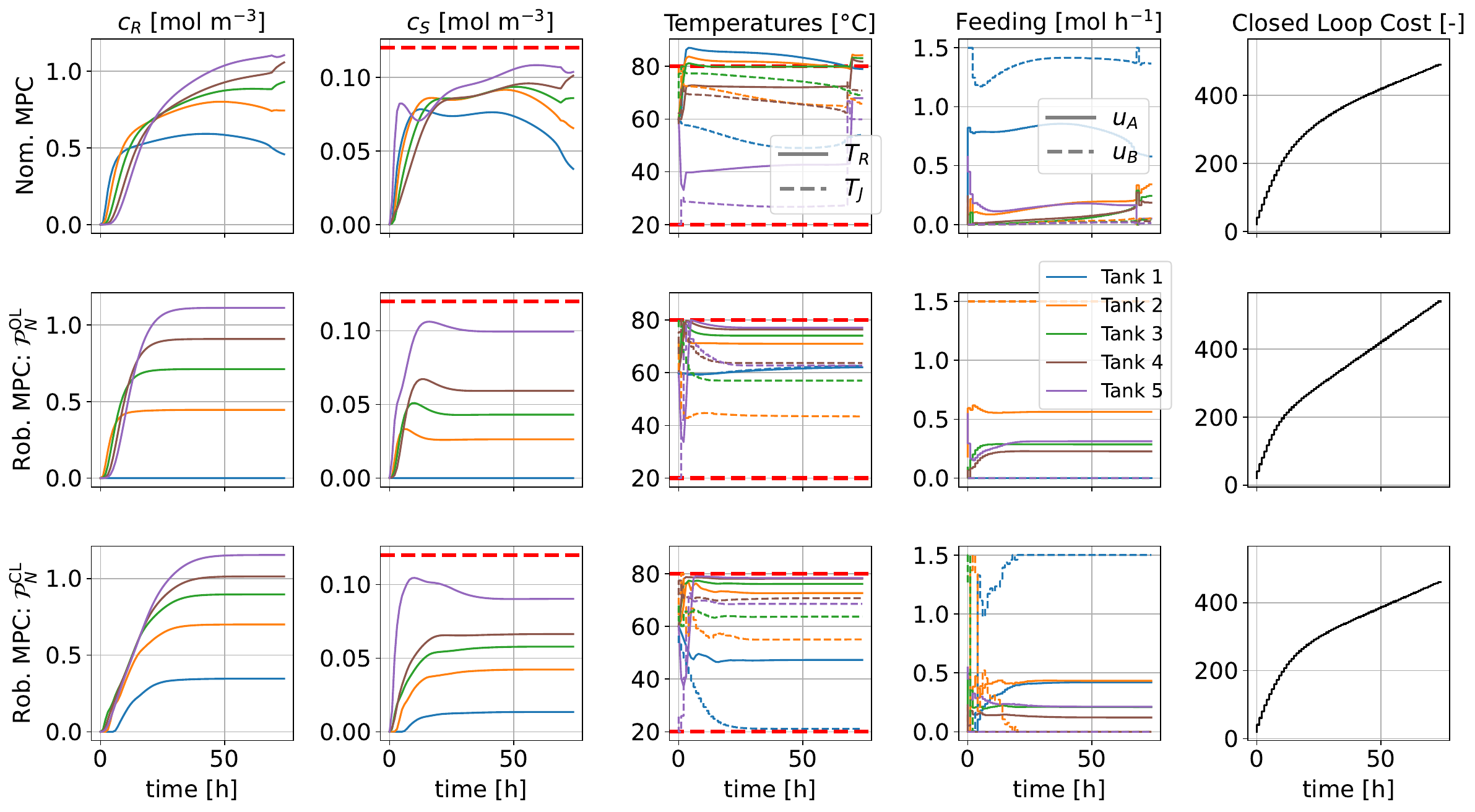}
    \caption{Closed-loop simulation of the five tank CSTR cascade. Nominal MPC (first row), the open-loop approach (second row) and the robust closed-loop approach (third row) are compared. The first two columns compare the concentrations of the value product $c_R$ and the side product $c_S$. The third column shows the reactor temperatures $T_{R,i}$ as solid lines and the jacket temperatures $T_{J,i}$ as dashed lines. The fourth column describes the inputs $u_{A,i}$ (solid) and $u_{B,i}$ (dashed) in each tank. The last column compares the accumulated closed-loop costs. The red dashed lines in all plots display the state and input constraints.}
    \label{fig:CSTR}
    \vspace{-2ex}
\end{figure*}
Educt $A$ reacts with $B$ to form $R$ with the second order side reaction of $2A\rightarrow S$. The differential equations of the $i$th CSTR are with the reaction rates $r_{1,i}=k_{1,i}\exp{-\frac{E_{A1,i}}{R_\text{gas}(T_{R,i}+273.15)}}c_{A,i}c_{B,i}$ and $r_{2,i}=k_{2,i}\exp{-\frac{E_{A2,i}}{R_\text{gas}(T_{R,i}+273.15)}}c_{A,i}^2$
%
%
%
%
%
%
%
%
%
%
%
%
%
%
%
\begin{align*}
    \dot{c}_{A,i} &= \dot{V}_{\text{out}}/V_i \ (c_{A,i-1} - c_{A,i}) - r_{1,i} - 2r_{2,i} + u_{A,i}/V_i, \\
    \dot{c}_{B,i} &= \dot{V}_{\text{out}}/V_i \ (c_{B,i-1} - c_{B,i}) - r_{1,i} + u_{B,i}/V_i, \\
    \dot{c}_{R,i} &= \dot{V}_{\text{out}}/V_i \ (c_{R,i-1} - c_{R,i}) + r_{1,i}, \\
    \dot{c}_{S,i} &= \dot{V}_{\text{out}}/V_i \ (c_{S,i-1} - c_{S,i}) + r_{2,i}, \\
    \begin{split}
        \dot{T}_{R,i} &= \dot{V}_{\text{out}}/V_i \ (T_{R,i-1} - T_{R,i}) 
        - r_{1,i} \Delta H_{R1,i}/(\rho c_p) \\
        &\quad - r_{2,i} \Delta H_{R2,i}/(\rho c_p) 
        + kA/(\rho c_p V_i)\ (T_{J,i} - T_{R,i}).
    \end{split}
\end{align*}
The inlet concentrations into the first reactor via $c_{A/B/R/S,0}$ is zero, as the feeding of both educts is done via the inputs $u_{A/B,i}$.
It is assumed that changes in the volume via feeding, temperature and composition changes can be neglected. Therefore, individual reactor volumes $V_i=V_R/n_{R}$ are modeled as constants, as the flow rate $\dot{V}_{\text{out}}$ through the reactors is also constant. The reactor temperatures $T_{R,i}$ are controlled via the jacket temperatures $T_{J,i}$, which are both constrained between 20~$^{\circ}$C and 80~$^{\circ}$C.
The values for the certain and uncertain parameters are given in the supplementary code\footnote{\label{URL} \url{https://github.com/MoritzHein/RobMPCExploitMon}}. The reaction enthalpies $\Delta H_{Rj,i}$ and kinetic constants $k_{j,i}$ are assumed to be uncertain by $\pm 30 \% $. The initial concentrations for all simulations were 0 and the initial temperature was set to be the inlet temperature $T_{R,0}=60~\si{\degreeCelsius}$.
The presented continuous model was discretized via orthogonal collocation with 4 collocation points. %
The control objective is to track an operating point in the cascade where the production of $R$ is maximized by penalizing the offset to the highest stoichiometrically achievable value in each tank with the highest penalty in the last tank. In addition to the constraints on $T_R$, $c_S$ is penalized and constrained to lie below $0.12~\si{\mole \per \cubic\metre}$. Thus Assumption~\ref{ass:Box_constraints} is satisfied. The total respective feeding of $A$ and $B$ are constrained to not exceed $\sum_{i=1}^{n_R} \vu_{A/B,i}\leq 1.5~\si{\mole \per \hour}$, while it is incentivized to feed as much as possible.
The stage-cost function $\ell(\vx_k,\vu_k,\vu_{k-1})$ is formulated quadratically as
\begin{multline} \label{eq:CLC}
    \ell(\vx_k,\vu_k,\vu_{k-1})=(\vx_k-\vx_{\text{set}})^\intercal \mQ(\vx_k-\vx_{\text{set}})\\
    + (\vu_k-\vu_{\text{set}})^{\intercal}\mR(\vu_k-\vu_{\text{set}})+(\vu_k-\vu_{k-1})^{\intercal}\mR_d(\vu_k-\vu_{k-1})\\
    + (\sum u_{A,i,k} - 1.5)^2+ (\sum u_{B,i,k} -1.5)^2.
\end{multline}
The values for the matrices and setpoints in the cost function can be found in the supplementary code\textsuperscript{\ref{URL}}.
The terminal cost is equivalent to 10 times the stage cost without the input terms. The prediction horizon was chosen to be $N= 35$, where one time step corresponds to one hour. The stage cost is calculated for both the top right and bottom left corners and is summed up over the prediction horizon as well as over all subregions.\\
The presented system is not monotone in the states nor the parameters.
However, a continuous decomposition function according to Definition~\ref{def:decomp} can be determined based on the relaxations for the tightest decomposition function from Proposition~\ref{prop:tight_decomp} mentioned in Section~\ref{sec:Mixed_Mon} and is given in the supplementary code\textsuperscript{\ref{URL}}. 
The decomposition function is used in~\eqref{eq:Closed_Loop_Tube} with no additional feedback policy besides the division of the reachable sets. The robust control invariance property of the terminal set was enforced online by replacing~\eqref{eq:Closed_Loop_Tube:subeq:terminal_Constr} with~\eqref{eq:RCIS_relax} guaranteeing the satisfaction of Assumption~\ref{ass:decomp}.
The closed-loop trajectories as well as the closed-loop cost of one closed-loop simulation with $n_R=5$, so 25 states and 20 uncertainties are shown in Figure~\ref{fig:CSTR}. The uncertainties for this simulation were chosen to be the maximum absolute values for reaction rates and reaction enthalpies.
Nominal MPC, in contrast to both robust approaches, leads to unacceptable operation due to temperature constraint violations.
Using the approach $\mathcal{P}_{N}^{\text{CL}_{\vd}}$, leads to lower performance losses compared to $\mathcal{P}_{N}^{\text{OL}_{\vd}}$, which can be seen in the slightly larger end concentration for $\vc_R$, slightly smaller $c_{S,5}$ and a smaller back-off to the temperature constraints. The improvement in accumulated closed-loop cost increases further with time, as it can be seen by the different slopes of the right plots of Figure~\ref{fig:CSTR}. The better performance was achieved because of the introduced recourse using four subregions for the sideproduct concentration $\evc_{S,1}$.

To demonstrate the influence of the number of partitions and the size of the system on the computation time and performance, increasingly many subregions were chosen for decreasingly large systems.
The solutions with one, three and five reactors are presented in Table~\ref{tab:res_CSTR}. The values of the uncertainties for this comparison were chosen to be constant in time, but uniformly distributed random values in each CSTR for the reaction rates and the reaction enthalpies in each of the 50 simulations. The computations were performed on an \emph{Intel i7-3770} CPU. To solve the optimization problems, \emph{CasADi}~\cite{anderssonCasADiSoftwareFramework2019} and \emph{IPOPT}~\cite{wachterImplementationInteriorpointFilter2006a} were used via the simulation and sampling framework of \emph{do-mpc}~\cite{fiedlerDompcFAIRNonlinear2023}.\\%
\begin{table}
    \centering
    \caption{Comparison of the approaches $\mathcal{P}_{N}^{\text{OL}_{\vd}}$ and $\mathcal{P}_{N}^{\text{CL}_{\vd}}$. Average accumulated cost over 75 iterations compared to MPC with full knowledge and average computation time per closed-loop iteration for different $n_R$ over 50 random scenarios.}
    \begin{tabular}{c |c c c c c c c}
         \multirow{2}{*}{$n_R$}& \multirow{2}{*}{$n_x$} & \multirow{2}{*}{$n_p$} & \multirow{2}{*}{$\mu_s$ for $\mathcal{P}_N^{\text{CL}_{\vd}}$} & \multicolumn{2}{c}{Closed-loop cost} & \multicolumn{2}{c}{Comp. time $\left[s\right]$} \\
         & & & &  $\mathcal{P}_{N}^{\text{OL}_{\vd}}$ & $\mathcal{P}_{N}^{\text{CL}_{\vd}}$ & $\mathcal{P}_{N}^{\text{OL}_{\vd}}$ & $\mathcal{P}_{N}^{\text{CL}_{\vd}}$\\ 
         \hline
         1 & 5 & 4 & 32 & 149 \% & 129 \% & 1.70 & 36.40\\
         3 & 15 & 12 & 16 & 161 \% & 114 \% & 16.35 & 111.45 \\ 
         5 & 25 & 20 & 4  & 142 \% & 122 \% & 42.76 &  72.38
    \end{tabular}
    \label{tab:res_CSTR}
    \vspace{-4ex}
\end{table}%
For the case with one tank reactor, recourse was introduced by partitioning the state space once in the dimensions of $T_{R,1},\ c_{R,1}$, three times in $c_{S,1}$ and then once in $c_{A,1}$. For the $n_R=3$ example, the reachable sets were partitioned three times in $c_{A,1}$ and then each subregion three times in $c_{S,1}$. For the $n_R=5$ case, the state $\evc_{S,1}$ was partitioned three times. Which dimension to partition was determined via an a priori parameter study. The number of partitions were chosen to demonstrate the possible tuning of the complexity of the closed-loop approach depending on the problem complexity.
The increase in the closed-loop cost, which is a consequence of robustifying the controllers, can be roughly halfed when using recourse via the division of the reachable sets.\\
For the open-loop approach, the average computation time is just influenced by the system dimension. Nevertheless, a rather large increase can be witnessed going from one reactor to five reactors, which may be due to it being closer to multiple constraints than the closed-loop approach.
The added complexity in the closed-loop approach leads to an increase in average computation time per closed-loop iteration, especially for large $\mu_s$. So for small systems, it was possible to choose large $\mu_s$, while still achieving real time capable sampling times. Nevertheless, the closed-loop approach showed to be scaling well with the system dimension, as it is possible to even control the 5-reactor system with 25 states and 20 uncertainties robustly with not more than twice the computation time, while achieving a better performance compared to the open-loop approach.
\section{Conclusion} \label{sec:Conclusion}
We introduced a robust model predictive control approach that leverages the inherent properties of monotone systems for efficient computation of reachable sets. This method ensures robust constraint satisfaction and recursive feasibility without the computational burden associated with the number of uncertainties. By dividing reachable sets, feedback (or recourse) can be flexibly incorporated into predictions, leading to less conservative control strategies. The concept of mixed-monotonicity broadens our methodology's applicability to non-monotone systems under any feedback law provided that a suitable decomposition function can be found.
The conservatism of over-approximating the reachable set is comparable to existing tube-based approaches.
Validated through a nonlinear high-dimensional chemical tank reactor cascade, our approach demonstrates large potential in handling complex systems with high dimensional uncertainty spaces.

Future work will be based on efficient computation of decomposition functions for general nonlinear systems as well as on the analysis of larger case studies.
\ifCLASSOPTIONcaptionsoff
  \newpage
\fi
\bibliographystyle{IEEEtran}
\bibliography{IEEEabrv,PhD_abr.bib}

\end{document}